\DeclareMathAlphabet\mathbfcal{OMS}{cmsy}{b}{n}
\newcommand*{\genbf}[1]{\ifmmode\mathbf{#1}\else\textbf{#1}\fi}
\newcommand{\norm}[1]{\left\lVert#1\right\rVert}
\newcommand{\Phix}[2]{\bm{\Phi}^{\mathbf{x}}[\mathbf{#1},\mathbf{#2}]}
\newcommand{\Phiu}[2]{\bm{\Phi}^{\mathbf{u}}[\mathbf{#1},\mathbf{#2}]}
\newcommand{\Psix}[0]{\bm{\Psi}^{\mathbf{x}}}
\newcommand{\Psiu}[0]{\bm{\Psi}^{\mathbf{u}}}
\newcommand{\Emme}[0]{\mathbfcal{M}}
\newcommand{\xb}[0]{\mathbf{x}}
\newcommand{\ub}[0]{\mathbf{u}}
\newcommand{\Fb}[0]{\mathbf{F}}
\newcommand{\wb}[0]{\mathbf{w}}
\newcommand\@erelb@r[1]{%
  \mathrel{\tikz[baseline=-.5ex]\draw[#1] (0,0)--(0.3,0);}
}
\newcommand{\erelbar}[1]{\@erelbar#1}
\def\@erelbar#1#2{%
  \ifcase\numexpr#1*4+#2\relax
    \@erelb@r{-}\or     
    \@erelb@r{->}\or    
    \@erelb@r{-|}\or    
    \@erelb@r{->|}\or   
    \@erelb@r{<-}\or    
    \@erelb@r{<->}\or   
    \@erelb@r{<-|}\or   
    \@erelb@r{<->}\or   
    \@erelb@r{|-}\or    
    \@erelb@r{|->}\or   
    \@erelb@r{|-|}\or   
    \@erelb@r{|<->|}\or 
    \@erelb@r{|<-}\or   
    \@erelb@r{|<->}\or  
    \@erelb@r{|<-|}\or  
    \@erelb@r{|<->|}    
  \else
    \@wrong
  \fi
}
\newtheorem{theorem}{Theorem}
\newtheorem{corollary}{Corollary}
\newtheorem{remark}{Remark}
\newtheorem{definition}{Definition}
\def\BibTeX{{\rm B\kern-.05em{\sc i\kern-.025em b}\kern-.08em
    T\kern-.1667em\lower.7ex\hbox{E}\kern-.125emX}}
\begin{document}
%
\title{Neural System Level Synthesis:  Learning over All Stabilizing Policies for Nonlinear Systems}
\author{Luca Furieri, Clara Luc\'{i}a Galimberti, and Giancarlo Ferrari-Trecate
    \thanks{Authors are with the Institute of Mechanical Engineering, École Polytechnique Fédérale de Lausanne, Switzerland. E-mails: {\tt\footnotesize \{luca.furieri, clara.galimberti, giancarlo.ferraritrecate\}@epfl.ch}.}
    \thanks{Research supported by the Swiss National Science Foundation under the NCCR Automation (grant agreement 51NF40\textunderscore 80545).} 
}

\maketitle

\begin{abstract}
We address the problem of designing stabilizing control policies for nonlinear systems in discrete-time, while minimizing an arbitrary cost function. When the system is linear and the cost is convex, the System Level Synthesis (SLS) approach offers an effective solution based on convex programming. Beyond this case, a globally optimal solution cannot be found in a tractable way, in general. 
In this paper, we develop a parametrization of \emph{all and only} the control policies stabilizing a given time-varying nonlinear system in terms of the combined effect of 1) a strongly stabilizing base controller and 2) a stable SLS operator to be freely designed. Based on this result, we  propose a Neural SLS (Neur-SLS) approach guaranteeing closed-loop stability during and after parameter optimization, without requiring any constraints to be satisfied. We exploit recent Deep Neural Network (DNN) models based on Recurrent Equilibrium Networks (RENs) to learn over a rich class of nonlinear stable operators, and demonstrate the effectiveness of the proposed approach in numerical examples.

\end{abstract}

\section{Introduction}
\label{sec:introduction}

The optimal control of nonlinear systems subject to global asymptotic stability guarantees is one of the most important and challenging problems in control theory. Despite a rich body of research in nonlinear control methods \cite{sastry2013nonlinear}, the general Nonlinear Optimal Control (NOC) problem is well understood only when the system dynamics are linear and the cost is convex. Classical approaches for dealing with continuous-time and discrete-time NOC include dynamic programming and the associated Bellman optimality principle, and the maximum principle for deterministic problems \cite{bertsekas2011dynamic, doya2007bayesian}. Unfortunately, application of these methods to general NOC problems is hindered by methodological and computational challenges \cite{doya2007bayesian}.

An alternative approach is provided by receding-horizon control schemes such as Nonlinear Model Predictive Control (NMPC) \cite{rawlings2017model}. The NMPC strategy is based on real-time optimization; a finite-horizon open-loop control problem is solved at each time instant for defining the control input at the same time. One of the main drawbacks of NMPC is that the resulting  control policy is not available offline in explicit form. Furthermore, several applications may not allow for sufficient computational resources to solve a mathematical program in real-time.

More recently, Reinforcement Learning (RL) and Deep Neural Networks (DNN) have emerged as powerful tools helping agents learn how to make sense of and optimally interact with complex environments  \cite{sutton2018reinforcement}. Despite the success of DNN control policies for finite-horizon control tasks \cite{onken2021neural,gama2021graph}, RL techniques cannot provide closed-loop stability guarantees in general. As a result, their applicability is limited  to systems that are not safety-critical.

Several methods to learn provably stabilizing DNN control policies have been proposed. We divide them into two categories. First, \emph{constrained optimization approaches} \cite{berkenkamp2018safe,gu2021recurrent,pauli2021offset} ensure global or local stability by enforcing appropriate Lyapunov-like inequalities during optimization. However, optimizing over conservative stability constraints may excessively limit the set of admissible policies.
Further, enforcing constraints such as Linear Matrix Inequalities (LMIs) becomes a computational bottleneck in large-scale applications. Second, \emph{unconstrained optimization approaches} seek to characterize classes of control policies with \emph{built-in} stability guarantees \cite{wang2021learning,furieri2021distributed}. The unconstrained approach allows one to  learn over real-valued parameters through unconstrained gradient descent, without compromising stability for arbitrarily chosen parameter values and whilst enjoying a low computational burden.  The main challenge is to characterize stabilizing policy classes that are broad enough to achieve optimized performance. An original perspective on parametrizing stabilizing control policies is based on the nonlinear System Level Synthesis (SLS) \cite{ho2020system}, which proposes to directly learn over stable nonlinear closed-loop maps — rather than over stabilizing nonlinear control policies. Unfortunately, nonlinear functional equality constraints must be satisfied during optimization \cite{ho2020system}. This aspect poses obstacles to practical implementation beyond the well-known linear case  \cite{wang2019system}.


\subsection{Contributions}
Motivated by the SLS framework, we establish an unconstrained parametrization of \emph{all} stabilizing nonlinear control policies for nonlinear time-varying systems in discrete-time. Our key finding is that the intractable constraints of nonlinear SLS \cite{ho2020system} need not be imposed during optimization, if a stabilizing base control policy is known, and if this base control policy defines itself a stable map from its inputs to its outputs. In this case, all and only the other stabilizing policies are parametrized in terms of a freely chosen stable SLS operator. 
 A main practical benefit is that, no matter how suboptimally this stable SLS operator may be chosen, closed-loop stability is guaranteed by design. Our result extends nonlinear variants of the Youla parametrization \cite{anantharam1984stabilization,fujimoto1998state} to discrete-time system. Further, we recover the recent result of \cite{wang2021learning} as a subcase, where the authors address the NOC problem for linear systems with nonlinear costs. Last, we propose using deep RENs \cite{kim2018standard,revay2021recurrent} as a finite-dimensional approximation of nonlinear stable operators, thus extending the well-known idea of Finite Impulse Responses (FIRs) approximations \cite{wang2019system} which is valid for linear SLS only.  
 The resulting Neur-SLS optimal control method is tested on formation control problems involving vehicles with nonlinear dynamics and nonlinear costs with collision avoidance terms. 

\smallskip













\smallskip

\textbf{Notation:}
The set of all sequences $\mathbf{x} = (x_0,x_1,x_2,\ldots)$, where $x_t \in \mathbb{R}^n$ for all $t\in \mathbb{N}$ and $\mathbb{N}$ is the set of natural numbers, is denoted as $\ell^n$. Moreover,  $\mathbf{x}$ belongs to $\ell_p^n \subset \ell^n$ with $p \in \mathbb{N}$ if $\norm{\mathbf{x}}_p = \left(\sum_{t=0}^\infty |x_t|^p\right)^{\frac{1}{p}} < \infty$, where $|\cdot|$ denotes any vector norm. We say that $\xb \in \ell^n_\infty$ if $\operatorname{sup}_{t}|x_t|< \infty$.
We use the notation $x_{j:i}$ to refer to the truncation of $\mathbf{x}$ to the finite-dimensional vector $(x_i,x_{i+1},\ldots,x_{j})$. An operator $\mathbf{A}:\ell^n \rightarrow \ell^m$ is said to be \emph{causal} if $\mathbf{A}(\mathbf{x}) = (A_0(x_0),A_1(x_{1:0}),\ldots,A_t(x_{t:0}),\ldots)$. If in addition $A_t(x_{t:0}) = A_t(0,x_{t-1:0})$, then $\mathbf{A}$ is said to be strictly causal. Similarly, $A_{j:i}(x_{j:0}) =  (A_i(x_{i:0}),A_{i+1}(x_{i+1:0}),\ldots,A_j(x_{j:0}))$. For a matrix $M \in \mathbb{R}^{m \times n}$, $M\mathbf{x} = (Mx_0,Mx_1,\ldots) \in \ell^m$. 
An operator $\mathbf{A}:\ell^n\rightarrow \ell^m$ is said to be $\ell_p$-stable if it is causal and $\mathbf{A}(\mathbf{a}) \in \ell_{p}^m$ for all $\mathbf{a} \in \ell_{p}^n$. Equivalently, we  write $\mathbf{A} \in \mathcal{L}_{p}$. 

\section{Problem Statement and Nonlinear System Level Synthesis}
\label{sec:problem_statement}

We consider nonlinear discrete-time time-varying systems
\begin{equation}
    \label{eq:system}
    x_{t} = f_t(x_{t-1:0},u_{t-1:0})+w_t\,,~~~t= 1,2,\ldots\,,
\end{equation}
where $x_t \in \mathbb{R}^n$ denotes the state vector, $u_t \in \mathbb{R}^m$ denotes the control input and $w_t \in \mathbb{R}^n$ stands for unknown process noise with $ w_0 =x_0 $. In \emph{operator form}, system \eqref{eq:system} is equivalent to
\begin{equation}
\label{eq:operator_form}
    \mathbf{x} = \mathbf{F}(\mathbf{x},\mathbf{u}) + \mathbf{w}\,,
\end{equation}
where $\mathbf{F}:\ell^n\times \ell^m \rightarrow \ell^n$ is the strictly causal operator such that $\mathbf{F}(\mathbf{x},\mathbf{u}) = (0,f_1(x_0,u_0),\ldots,f_t(x_{t-1:0},u_{t-1:0}),\ldots)$. Further, we assume that $w_t \sim \mathcal{D}_t(\mu_t,\Sigma_t)$, i.e., $w_t$ is distributed according to an unknown distribution $\mathcal{D}_t$ with expected value $\mu_t$ and covariance matrix $\Sigma_t\succ 0$, and that $\mathbf{w}$ belongs to $\ell_{p}^n$. 

In order to control the behavior of system \eqref{eq:system} we consider nonlinear, dynamic, time-varying feedback control policies
\begin{equation}
    \label{eq:control}
    \mathbf{u} = \mathbf{K}(\mathbf{x}) = (K_0(x_0),K_1(x_{1:0}),\ldots, K_t(x_{t:0}),\ldots)\,,
\end{equation}
where $\mathbf{K}$ is a causal operator $\mathbf{K}:\ell^n\ \rightarrow \ell^m$ to be designed.  It is easy to see that for each sequence of disturbances $\mathbf{w} \in \ell^n_p$ the closed-loop system \eqref{eq:system}-\eqref{eq:control} produces unique trajectories. Hence, the closed-loop mappings $\mathbf{w} \erelbar{21}(\mathbf{x},\mathbf{u})$ are well-defined. Specifically, for a system $\mathbf{F}$ and a controller $\mathbf{K}$, we denote the corresponding induced closed-loop maps $\mathbf{w} \erelbar{21}\mathbf{x}$ and $\mathbf{w} \erelbar{21}\mathbf{u}$ as $\Phix{F}{K}$ and $\Phiu{F}{K}$, respectively. Therefore, we have $\mathbf{x} = \Phix{F}{K}(\mathbf{w})$ and $\mathbf{u} = \Phiu{F}{K}(\mathbf{w})$ for all $\mathbf{w} \in \ell_{p}^n$.

Our goal is to synthesize a control policy $\mathbf{K}$ that complies with two requirements:
\begin{itemize}
    \item[(R1)] The closed-loop maps $\Phix{F}{K}$ and $\Phiu{F}{K}$ are $\ell_p$-stable.
    \item[(R2)] A loss function
    \begin{equation}
        \label{eq:cost}
         J = \mathbb{E}_{w_{T:0}}\left[\sum_{t=0}^{T}l(x_t,u_t)\right]\,,
    \end{equation}
     is globally minimized, where $l$ is piecewise differentiable and $l(x,u)\geq0$ for all $(x,u) \in \mathbb{R}^{m+n}$.
\end{itemize}
Requirement  (R1) is a \emph{hard} constraint to be satisfied for all the policies we optimize over. In other words, we require fail-safe learning, in the sense that closed-loop stability must be guaranteed during and after optimization of the parameters describing the policies. 
Instead — as is standard in neural network training problems — we treat (R2) as a ``soft'' requirement. 
We do not expect that a gradient-descent-based algorithm finds the globally optimal solution for any  initialization — this is generally impossible for problems beyond Linear Quadratic Gaussian (LQG) control, which enjoy convexity of the cost and linearity of the optimal policies \cite{tang2021analysis,furieri2020first}. 
Last, we note that the theoretical results of Section~\ref{sec:main} also apply to convergent infinite-horizon costs  \cite{bertsekas2011dynamic}; we focus on the finite-horizon cost \eqref{eq:cost} to enable the DNN implementation of Section~\ref{sec:Neur-SLS}.   


We are ready to formulate the NOC problem as
\begin{alignat}{3}
\operatorname{NOC:}~~~&\min_{\mathbf{K}(\cdot)}&& \qquad \mathbb{E}_{w_{T:0}}\left[\sum_{t=0}^{T}l(x_t,u_t)\right]\label{eq:LOSSNOC}\\
&\operatorname{s.t.}~~ && x_{t} = f_t(x_{t-1:0},u_{t-1:0})+w_t\,, ~~w_0 = x_0\,,\nonumber\\
&~~&&u_t = K_t(x_{t:0})\,,~~\forall t =0,1,\ldots\,,\nonumber\\
&~~&&(\Phix{F}{K},\Phiu{F}{K}) \in \mathcal{L}_{p}\,.\label{eq:CONSTRNOC}
\end{alignat}

\smallskip

\noindent  The NOC problem can also be approached as a \emph{learning problem}: based on many observed noisy state/input trajectories — which represent our dataset — iteratively improve the policy $\mathbf{K}(\cdot)$ by descending the gradient of the empiric evaluation of the loss function \eqref{eq:LOSSNOC}. However, learning over the space of stabilizing control policies leads to the intractable constraints \eqref{eq:CONSTRNOC} in general. The idea behind the SLS approach \cite{wang2019system, ho2020system} is to circumvent the difficulty of characterizing stabilizing controllers, by  directly designing stable closed-loop maps. Define the set of \emph{achievable} closed-loop maps for $\mathbf{F}$ as
\begin{equation}
    \label{eq:Phi_ach}
    \mathcal{CL}[\mathbf{F}] = \{(\Phix{F}{K},\Phiu{F}{K})|~\mathbf{K}\text{ is causal}\}\,,
\end{equation}
and the set of all \emph{achievable  and stable} closed-loop maps as
\begin{equation}
    \label{eq:Phi_ach_stable}
    \mathcal{CL}_p[\Fb] = \{(\Psix,\Psiu) \in \mathcal{CL}[\Fb]|~(\Psix,\Psiu) \in \mathcal{L}_{p}\}\,.
\end{equation}
Note that, if $(\Psix,\Psiu) \in \mathcal{CL}_p[\mathbf{F}]$, then $\mathbf{x} = \Psix(\mathbf{w}) \in \ell_{p}^n$ and $\mathbf{u} = \Psiu(\mathbf{w})\in \ell_{p}^m$ for all $\wb \in \ell_{p}^n$. Based on Theorem III.3 of \cite{ho2020system}, and adding the requirement that the closed-loop maps must belong to $\mathcal{L}_p$, we summarize the main SLS result for nonlinear discrete-time systems.

\begin{theorem}[\emph{Nonlinear SLS parametrization \cite{ho2020system}}]
\label{th:SLS}
~
\begin{enumerate}
    \item The set $\mathcal{CL}_p[\Fb]$ of all achievable and stable closed-loop responses admits the following characterization:
    \begin{align}
        \mathcal{CL}_p[\Fb] = \{&(\Psix,\Psiu)|~~(\Psix,\Psiu)\text{ are causal}\,,\label{SLS_equality0}\\
        &\Psix = \mathbf{F}(\Psix,\Psiu)+\mathbf{I}\,,\label{SLS_equality1}\\
        &(\Psix,\Psiu) \in \mathcal{L}_{p} \label{SLS_equality2} \}\,.
    \end{align}
    

\item For any $(\Psix,\Psiu) \in \mathcal{CL}_p[\Fb]$, the causal controller 
\begin{equation}
   \mathbf{u} = \mathbf{K}(\mathbf{x}) = \Psiu((\Psix)^{-1}(\mathbf{x}))\label{eq:equivalent_representation}\,,
\end{equation}
 is unique, and it produces the stable closed-loop responses $(\Psix,\Psiu)$. 
 
\end{enumerate}
\end{theorem}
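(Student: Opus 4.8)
The plan is to establish the two inclusions of the set identity in Part~1), and then to deduce Part~2) from the injectivity of $\Psix$ that emerges along the way. For the inclusion $\mathcal{CL}_p[\Fb] \subseteq \{\cdots\}$, I would take an arbitrary achievable and stable pair $(\Psix,\Psiu) = (\Phix{F}{K},\Phiu{F}{K})$ for some causal $\mathbf{K}$. Causality of the pair follows by a routine induction on $t$: one has $x_0 = w_0$, $u_0 = K_0(x_0)$, and since $\Fb$ is strictly causal and $\mathbf{K}$ causal, each $x_t$ and $u_t$ depends only on $w_{t:0}$. The functional equation \eqref{SLS_equality1} is then obtained by substituting the closed-loop signals $\xb = \Psix(\wb)$ and $\ub = \Psiu(\wb)$ into the dynamics \eqref{eq:operator_form}, giving $\Psix(\wb) = \Fb(\Psix(\wb),\Psiu(\wb)) + \wb$ for every $\wb \in \ell^n_p$; as $\mathbf{I}$ is the identity operator, this is precisely \eqref{SLS_equality1}. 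The stability requirement \eqref{SLS_equality2} is inherited from membership in $\mathcal{CL}_p[\Fb]$.

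The reverse inclusion is where the real work lies, and it hinges on a single lemma: any causal $\Psix$ satisfying $\Psix = \Fb(\Psix,\Psiu) + \mathbf{I}$ with $\Fb$ strictly causal is invertible, with a causal inverse $(\Psix)^{-1}$. I would prove this constructively by recovering $\wb$ from $\xb = \Psix(\wb)$ one time step at a time. At $t=0$, strict causality of $\Fb$ forces $x_0 = w_0$. At a generic $t \geq 1$, the dynamics give $w_t = x_t - f_t(x_{t-1:0},u_{t-1:0})$, and strict causality makes $f_t$ depend only on $x_{t-1:0}$ and on $u_{t-1:0} = \Psi^{\mathbf{u}}_{t-1:0}(w_{t-1:0})$, which by the induction hypothesis are already expressed as causal functions of $x_{t-1:0}$. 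Hence $w_t$ is a causal function of $x_{t:0}$, so $(\Psix)^{-1}$ exists and is causal. I expect this triangular inversion to be the \emph{main obstacle}, since it is the only step that genuinely exploits the one-step delay encoded in the strict causality of $\Fb$.

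With the lemma at hand, I would define the candidate controller $\mathbf{K} := \Psiu \circ (\Psix)^{-1}$, which is causal as a composition of causal operators. To check that it reproduces $(\Psix,\Psiu)$, I substitute $\xb = \Psix(\wb)$ into the closed loop: then $\mathbf{K}(\xb) = \Psiu((\Psix)^{-1}(\Psix(\wb))) = \Psiu(\wb)$, while \eqref{SLS_equality1} gives $\Fb(\Psix(\wb),\Psiu(\wb)) + \wb = \Psix(\wb)$. Thus $\xb = \Psix(\wb)$, $\ub = \Psiu(\wb)$ solve the closed-loop equations \eqref{eq:operator_form}--\eqref{eq:control}; by the uniqueness of closed-loop trajectories noted before \eqref{eq:Phi_ach}, they are the induced maps, whence $\Phix{F}{K} = \Psix$ and $\Phiu{F}{K} = \Psiu$, and \eqref{SLS_equality2} certifies $(\Psix,\Psiu) \in \mathcal{CL}_p[\Fb]$. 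Finally, Part~2) follows at once: any $\mathbf{K}$ achieving $(\Psix,\Psiu)$ must satisfy $\Psiu(\wb) = \mathbf{K}(\Psix(\wb))$ for all $\wb \in \ell^n_p$, and substituting $\wb = (\Psix)^{-1}(\xb)$ pins down $\mathbf{K} = \Psiu \circ (\Psix)^{-1}$ on the range of $\Psix$, establishing uniqueness.
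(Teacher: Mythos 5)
Your proof is correct, but there is nothing in the paper to compare it against: Theorem~\ref{th:SLS} is stated without proof, imported (with the added $\mathcal{L}_p$ requirement) from Theorem~III.3 of the cited nonlinear SLS work \cite{ho2020system}. Measured against that source, your argument is essentially the standard one. You correctly identify the crux: the functional equation \eqref{SLS_equality1} together with strict causality of $\Fb$ forces $\Psix$ to have the form ``identity plus strictly causal,'' and such triangular operators are invertible with causal inverse by back-substitution in time, which is exactly what makes the controller realization \eqref{eq:equivalent_representation} well defined. Your forward inclusion (substituting the closed-loop signals into the dynamics) and the verification that $\mathbf{K} = \Psiu \circ (\Psix)^{-1}$ reproduces $(\Psix,\Psiu)$ via uniqueness of closed-loop trajectories are both sound.

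Two sharpenings would make the write-up airtight. First, your recursion as phrased only recovers $\wb$ from a signal already known to be of the form $\xb = \Psix(\wb)$, i.e., it constructs a causal \emph{left} inverse on the range of $\Psix$; this is why you are forced to hedge at the end with uniqueness of $\mathbf{K}$ ``on the range of $\Psix$.'' The hedge is unnecessary: apply the same triangular recursion to an \emph{arbitrary} $\xb \in \ell^n$, setting $\hat w_0 = x_0$ and $\hat w_t = x_t - f_t\bigl(x_{t-1:0},\, \Psi^u_{t-1:0}(\hat w_{t-1:0})\bigr)$, and check by induction using \eqref{SLS_equality1} that $\Psix(\hat{\wb}) = \xb$. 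Hence $\Psix$ is a bijection of $\ell^n$, the controller \eqref{eq:equivalent_representation} is a genuine causal operator on all of $\ell^n$, and $\mathbf{K}$ is unique everywhere, which is what the theorem asserts. Second, a minor domain point: the paper introduces the closed-loop maps as acting on $\wb \in \ell_p^n$, whereas the operator identity \eqref{SLS_equality1} and the inversion argument live on all of $\ell^n$; since trajectories are uniquely defined for every disturbance sequence, your pointwise substitution works verbatim for any $\wb \in \ell^n$, but this deserves an explicit sentence rather than being left implicit.
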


\smallskip

Theorem~\ref{th:SLS} clarifies that any policy $\mathbf{K}(\xb)$ achieving  $\ell_p$-stable closed-loop maps can be described in terms of two operators $(\Psix,\Psiu) \in \mathcal{L}_{p}$ complying with the nonlinear functional equality \eqref{SLS_equality1}. Therefore, the NOC problem admits an equivalent Nonlinear SLS (N-SLS) formulation:
\begin{alignat}{3}
\operatorname{N-SLS:}~~~&\min_{(\Psix,\Psiu)}&& \qquad \mathbb{E}_{w_{T:0}}\left[\sum_{t=0}^{T}l(x_t,u_t)\right] \label{eq:NSLS}\\
&\operatorname{s.t.}~~ && x_t = \Psi^x_t(w_{t:0})\,,~~~u_t = \Psi^u_t(w_{t:0})\,,\nonumber\\
&~~&&(\Psix,\Psiu)\in \mathcal{CL}_p[\Fb]\,, \forall t = 0,1,\ldots  \nonumber
\end{alignat}

\noindent According to Theorem~\ref{th:SLS}, the constraint $(\Psix,\Psiu)\in \mathcal{CL}_p[\Fb]$ is equivalent to requiring that $(\Psix,\Psiu)$ 
comply with both \eqref{SLS_equality1} and \eqref{SLS_equality2}, which are challenging to satisfy simultaneously. 
The constraint \eqref{SLS_equality1} simply defines the operator $\Psix$ in terms of $\Psiu$ and can be computed explicitly because $\mathbf{F}$ is strictly causal. Instead, it is hard to select $\Psiu \in \mathcal{L}_{p}$ such that the corresponding $\Psix$ satisfies $\Psix \in \mathcal{L}_{p}$. 
The paper \cite{ho2020system} suggests directly searching over $\ell_p$-stable operators $(\Psix,\Psiu)$ and abandoning the goal of complying with \eqref{SLS_equality1} exactly.  One can then study robust stability when \eqref{SLS_equality1} only holds approximately. However, this way of proceeding may result in conservative control policies. In this work, we turn our attention to the problem of complying with \eqref{SLS_equality1}-\eqref{SLS_equality2} \emph{by design}, thus leading to an unconstrained parametrization of all stabilizing controllers.

\section{Main Result: Unconstrained Parametrization of all Stabilizing Controllers}
\label{sec:main}

In this section we show that, if \emph{a single} stabilizing control policy $\mathbf{K}'(\mathbf{x})$ with $\mathbf{K}' \in \mathcal{L}_{p}$ is available for the nonlinear system $\mathbf{F}$, it is possible to parametrize \emph{all other} stabilizing control policies in terms of  stable maps $\Emme \in \mathcal{L}_{p}$ by applying the control input 
\begin{equation}
\label{eq:stabilizing_input}
    \mathbf{u} = \mathbf{K}'(\mathbf{x})+\Emme(\mathbf{w})\,,
\end{equation}
where $\mathbf{w}$ is reconstructed as $\mathbf{x}-\mathbf{F}(\mathbf{x},\mathbf{u})$.
Furthermore, if $\mathbf{K}'$ is stabilizing, but not itself an operator in $\mathcal{L}_p$, the control policy \eqref{eq:stabilizing_input} achieves $\ell_p$-stable closed-loop maps $(\Psix,\Psiu) \in \mathcal{CL}_p[\Fb]$  for any $\Emme \in \mathcal{L}_{p}$. 
From now on, we consider the system
\begin{align}
    \mathbf{x} &= \mathbf{F}(\mathbf{x},\mathbf{u})+\mathbf{w}\,,\quad \mathbf{u} = \mathbf{K}'(\mathbf{x}) + \mathbf{v}\,,\label{eq:CL_1}\\
    \mathbf{v} &= \Emme(\mathbf{w})\,,\label{eq:CL_3}
\end{align}
where $\mathbf{K}'$ is the \emph{base controller}, and the operator $\Emme:\ell^n\rightarrow\ell^m$ must be designed in order to comply with the closed-loop stability requirement (R1) and the  optimality requirement (R2). The combined effect of $\mathbf{K}'(\mathbf{x})$ and $\Emme(\mathbf{w})$ defines an overall control policy $\mathbf{K}$ such that $\mathbf{u}(\xb) = \mathbf{K}(\mathbf{x})$ is equivalent to \eqref{eq:stabilizing_input}. 

Akin to the Youla parametrization for linear systems \cite{zhou1998essentials}, the role of a base controller $\mathbf{K}'$ is to appropriately stabilize the system; this allows defining a set of ``stable coordinates'' and then freely optimize over $\Emme$.  
\begin{definition}[$\ell_p$-Input-State stabilizing controller]
\label{def:stabilizing}
Given $p \in \mathbb{N}\cup \infty$, we say that the base controller $\mathbf{K}'$ is $\ell_p$-Input-State ($\ell_p$-IS) stabilizing  for $\mathbf{F}$ if the map $(\mathbf{w},\mathbf{v}) \erelbar{21} (\mathbf{x},\mathbf{u})$ defined by \eqref{eq:CL_1} lies in $\mathcal{L}_p$.
\end{definition} 

\smallskip


We remark that the notion of $\ell_p$-IS stabilizability is linked to those of incremental passivity \cite{koelewijn2021incremental} and input-state-stability (ISS) \cite{jiang2001input} for discrete-time systems. For instance, if a system is ISS-stable, then the control policy $\mathbf{K'} = 0$ is $\ell_\infty$-IS stabilizing.  While this paper does not deal with the computation of a base controller, modern design methods include NMPC \cite{rawlings2017model} and incremental dissipativity \cite{koelewijn2021incremental}.  Last, note that our results may be extended to \emph{local} notions of stability, for which a locally stabilizing base controller would be sufficient \cite{fujimoto1998state}.

\smallskip

\begin{definition}[Strongly $\ell_p$-IS stabilizing  controller]
\label{def:strongly_stabilizing}
We say that the base controller $\mathbf{K}'$ is \emph{strongly $\ell_p$-IS stabilizing} if it is  $\ell_p$-IS stabilizing, and additionally $\mathbf{K}'\in \mathcal{L}_{p}$.
\end{definition}

 \smallskip
 

We are  ready to state our main result.


\begin{theorem}
\label{th:result} 
Consider the closed-loop system \eqref{eq:CL_1}-\eqref{eq:CL_3}. 
\begin{enumerate}
    \item Assume that $\mathbf{K}'$ is $\ell_p$-IS stabilizing. Then,  $(\Phix{\Fb}{\mathbf{K}},\Phiu{\Fb}{\mathbf{K}}) \in \mathcal{CL}_p[\Fb]$ for every $\Emme \in \mathcal{L}_{p}$.
    \item If, in addition, $\mathbf{K}'$ is strongly  $\ell_p$-IS stabilizing, then, for any $(\Psix,\Psiu) \in \mathcal{CL}_p[\Fb]$, there exists $\Emme \in \mathcal{L}_{p}$ such that the control policy \eqref{eq:stabilizing_input}
    achieves the closed-loop maps $(\Phix{\Fb}{\mathbf{K}},\Phiu{\Fb}{\mathbf{K}}) = (\Psix,\Psiu)$.
\end{enumerate}
\end{theorem}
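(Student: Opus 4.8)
The plan is to treat the two claims separately, both organized around the input-state operator $\mathbf{G}:(\mathbf{w},\mathbf{v}) \erelbar{21} (\mathbf{x},\mathbf{u})$ induced by \eqref{eq:CL_1}: by Definition~\ref{def:stabilizing}, the $\ell_p$-IS stabilizing hypothesis means precisely that $\mathbf{G}$ is causal and $\mathbf{G}\in\mathcal{L}_{p}$. For claim~1, I would first check that fixing $\mathbf{v}=\Emme(\mathbf{w})$ yields a genuine, causal control policy $\mathbf{u}=\mathbf{K}(\mathbf{x})$, so that the closed-loop maps are achievable, i.e. $(\Phix{\Fb}{\mathbf{K}},\Phiu{\Fb}{\mathbf{K}})\in\mathcal{CL}[\Fb]$. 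The key observation is that $\mathbf{w}$ is reconstructible causally via $w_t = x_t - f_t(x_{t-1:0},u_{t-1:0})$: at time $t$, given $x_{t:0}$ and the already-computed $u_{t-1:0}$, one recovers $w_{t:0}$; then $v_t$ (the $t$-th component of $\Emme(\mathbf{w})$) and $u_t = K'_t(x_{t:0})+v_t$ are determined, so $u_t$ is a causal function of $x_{t:0}$ by induction. Stability then follows from a composition chain: for $\mathbf{w}\in\ell_p^n$ we get $\mathbf{v}=\Emme(\mathbf{w})\in\ell_p^m$ since $\Emme\in\mathcal{L}_{p}$, hence $(\mathbf{w},\mathbf{v})\in\ell_p^{n+m}$, and therefore $(\mathbf{x},\mathbf{u})=\mathbf{G}(\mathbf{w},\mathbf{v})\in\ell_p^{n+m}$ because $\mathbf{G}\in\mathcal{L}_{p}$. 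Thus the achieved maps are $\ell_p$-stable, and together with achievability this yields $(\Phix{\Fb}{\mathbf{K}},\Phiu{\Fb}{\mathbf{K}})\in\mathcal{CL}_p[\Fb]$.

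For claim~2, given a target $(\Psix,\Psiu)\in\mathcal{CL}_p[\Fb]$, the natural candidate is obtained by solving $\mathbf{u}=\mathbf{K}'(\mathbf{x})+\mathbf{v}$ for $\mathbf{v}$ along the desired responses $\mathbf{x}=\Psix(\mathbf{w})$, $\mathbf{u}=\Psiu(\mathbf{w})$, namely
\begin{equation*}
\Emme := \Psiu - \mathbf{K}'\circ\Psix\,.
\end{equation*}
I would then verify two things. First, $\Emme\in\mathcal{L}_{p}$: it is causal as a difference/composition of causal operators, and it is $\ell_p$-stable because $\Psix\in\mathcal{L}_{p}$, $\mathbf{K}'\in\mathcal{L}_{p}$ (this is exactly where \emph{strong} stabilizability enters), and $\Psiu\in\mathcal{L}_{p}$, so for $\mathbf{w}\in\ell_p^n$ each term lands in $\ell_p$. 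Second, that this $\Emme$ reproduces $(\Psix,\Psiu)$: substituting the candidate trajectory into \eqref{eq:CL_1}, the state equation $\Psix=\mathbf{F}(\Psix,\Psiu)+\mathbf{I}$ holds by \eqref{SLS_equality1}, while the input equation $\Psiu=\mathbf{K}'\circ\Psix+\Emme$ holds by the very definition of $\Emme$; hence $(\Psix(\mathbf{w}),\Psiu(\mathbf{w}))$ solves the closed-loop equations for every $\mathbf{w}$. Since the closed-loop system produces unique trajectories for each $\mathbf{w}\in\ell_p^n$, this solution must coincide with $(\Phix{\Fb}{\mathbf{K}},\Phiu{\Fb}{\mathbf{K}})$, giving the desired equality.

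The main point requiring care is the interplay between well-posedness and the two distinct stability hypotheses. In claim~1, one must ensure that reconstructing $\mathbf{w}$ from $(\mathbf{x},\mathbf{u})$ does not destroy causality of the induced policy $\mathbf{K}$, which the inductive argument above handles. In claim~2, the delicate step is establishing $\Emme\in\mathcal{L}_{p}$: without $\mathbf{K}'\in\mathcal{L}_{p}$, the composition $\mathbf{K}'\circ\Psix$ need not be $\ell_p$-stable, so $\Emme$ would fail to be an admissible free parameter. This is precisely why mere $\ell_p$-IS stabilizability suffices for the sufficiency direction (claim~1), whereas \emph{strong} $\ell_p$-IS stabilizability is needed for the completeness direction (claim~2).
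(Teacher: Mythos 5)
Your proposal is correct, and it matches the paper's proof in all structural respects: claim~1 is the same composition chain through the input-state map of \eqref{eq:CL_1} (which the paper compresses into ``a restatement of Definition~\ref{def:stabilizing} with $\mathbf{v}=\Emme(\mathbf{w})$''), and claim~2 uses the identical choice $\Emme = \Psiu-\mathbf{K}'(\Psix)$ of \eqref{eq:choice_Emme}, with strong stabilizability entering in exactly the same place, namely to guarantee $\mathbf{K}'(\Psix)\in\mathcal{L}_p$. The one genuine difference is the final step of claim~2: the paper establishes $(\Phix{\Fb}{\mathbf{K}},\Phiu{\Fb}{\mathbf{K}})=(\Psix,\Psiu)$ by an explicit induction on the time index (base case $\Phi_0^x=\Psi_0^x=I$ from strict causality of $\mathbf{F}$, inductive step via \eqref{eq:achievability_proof}--\eqref{eq:passage_inductive}), whereas you substitute the candidate trajectories into the closed-loop equations --- the state equation holding by the achievability identity \eqref{SLS_equality1} of Theorem~\ref{th:SLS}, the input equation by the very definition of $\Emme$ --- and then appeal to uniqueness of closed-loop trajectories. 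The two arguments are equivalent in content: the uniqueness fact you invoke is asserted by the paper earlier (``it is easy to see\ldots'') and is itself verified by precisely the time-recursion the paper's induction carries out, so your version is a cleaner packaging that shifts the inductive burden into that well-posedness claim; it buys brevity at the cost of leaning on an unproved (though elementary) assertion. A small point in your favor: in claim~1 you explicitly verify that $\mathbf{u}=\mathbf{K}'(\mathbf{x})+\Emme(\mathbf{x}-\mathbf{F}(\mathbf{x},\mathbf{u}))$ defines a \emph{causal} policy $\mathbf{K}$ via the causal reconstruction $w_t = x_t - f_t(x_{t-1:0},u_{t-1:0})$, a well-posedness detail the paper leaves implicit.
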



\begin{proof}
    $1)$  is a restatement of Definition~\ref{def:stabilizing} with the substitution $\mathbf{v} = \Emme(\wb)$. We prove $2)$.   Let $(\Psix,\Psiu) \in \mathcal{CL}_p[\Fb]$ and choose
    \begin{equation}
        \label{eq:choice_Emme}
        \Emme = -\mathbf{K}'(\Psix)+\Psiu\,.
    \end{equation}
    Note that the composition of operators in $\mathcal{L}_p$ remains in $\mathcal{L}_p$.
    Then, since $(\Psix,\Psiu) \in \mathcal{L}_p$,  we conclude that $\Emme \in \mathcal{L}_p$. It remains to prove that \eqref{eq:choice_Emme} is such that the resulting control policy \eqref{eq:stabilizing_input} achieves the closed-loop maps $(\Phix{\Fb}{\mathbf{K}},\Phiu{\Fb}{\mathbf{K}}) = (\Psix,\Psiu)$. We prove this fact by induction. For the inductive step, we assume that, for any $j \in \mathbb{N}$, we have $\Phi^x_i = \Psi^x_i$ and $\Phi^u_i = \Psi^u_i$ for all $i \in \mathbb{N}$ with $0\leq i \leq j$, where we have dropped the notation $[\mathbf{F},\mathbf{K}]$ in the interest of readability. Since $(\Phix{F}{K},\Phiu{F}{K})$ are closed-loop maps induced by $\mathbf{K}$ and $(\Psix,\Psiu) \in \mathcal{CL}_p[\Fb]$, then
    \begin{equation}
        \label{eq:achievability_proof}
        \Phi^x_{j\hspace{-0.04cm}+\hspace{-0.04cm}1} \hspace{-0.1cm}=\hspace{-0.1cm} F_{j\hspace{-0.04cm}+\hspace{-0.04cm}1}(\Phi^x_{j:0},\Phi^u_{j:0})\hspace{-0.04cm}+\hspace{-0.01cm}I,\Psi^x_{j\hspace{-0.01cm}+\hspace{-0.04cm}1} \hspace{-0.1cm}=\hspace{-0.1cm} F_{j\hspace{-0.04cm}+\hspace{-0.01cm}1}(\Psi^x_{j:0},\Psi^u_{j:0})\hspace{-0.01cm}+\hspace{-0.04cm}I.
    \end{equation}
    Then, by  \eqref{eq:stabilizing_input},
    \eqref{eq:choice_Emme}, \eqref{eq:achievability_proof}, and $x_{j+1:0} = \Phi^x_{j+1:0}(w_{j+1:0})$,
    \begin{align}
        \Phi_{j+1}^u = &K'_{j+1}\left(F_{j+1:0}(\Phi^x_{j:0},\Phi^u_{j:0})+I\right) -\nonumber\\
        &-K'_{j+1}\left(F_{j+1:0}(\Psi^x_{j:0},\Psi^u_{j:0})+I\right)+\Psi^u_{j+1}\,.\label{eq:passage_inductive}
    \end{align}
    By inductive assumption  $\Phi^x_i = \Psi^x_i$ and $\Phi^u_i = \Psi^u_i$ for every $0\leq i\leq j$. Hence, \eqref{eq:passage_inductive} simplifies to $\Phi^u_{j+1} = \Psi^u_{j+1}$. By \eqref{eq:achievability_proof},  $\Phi^x_{j+1}=\Psi^x_{j+1}$. 
    For the base case  $j=0$, by strict causality of $\mathbf{F}$ and achievability, we have $\Phi_0^x = \Psi_0^x = I$.  Then, \eqref{eq:stabilizing_input} and \eqref{eq:choice_Emme} imply $\Phi^u_0 = K'_0(\Phi^x_0)-K'_0(\Psi^x_0)+\Psi^u_0 = \Psi^u_0$.
\end{proof}

To summarize, Theorem~\ref{th:result} establishes that the nonlinear functional equalities involved in the N-SLS problem \eqref{eq:NSLS} can be dropped by tackling the problem in two steps. First, obtain an $\ell_p$-IS stabilizing, yet suboptimal control policy $\mathbf{u}= \mathbf{K}'(\mathbf{x})$. Second, explore the space of operators $\Emme \in \mathcal{L}_{p}$ such that $\mathbf{u}= \mathbf{K}'(\mathbf{x})+\Emme(\mathbf{x}-\mathbf{F}(\xb,\ub))$ optimizes the closed-loop behavior. The corresponding closed-loop maps will lie in $\mathcal{L}_p$ by design. If $\mathbf{K}'$ is also strongly $\ell_p$-IS stabilizing, then, the proposed parametrization is \emph{complete}; all the achievable closed-loop maps — including the ones that are globally optimal for the N-SLS problem \eqref{eq:NSLS} —  are achieved by appropriately selecting $\Emme\in \mathcal{L}_{p}$ in \eqref{eq:stabilizing_input}.  We illustrate the proposed control architecture in Figure~\ref{fig:scheme}.
\begin{figure}
  \centering
  \vspace{2pt}
  \includegraphics[width=0.99\linewidth]{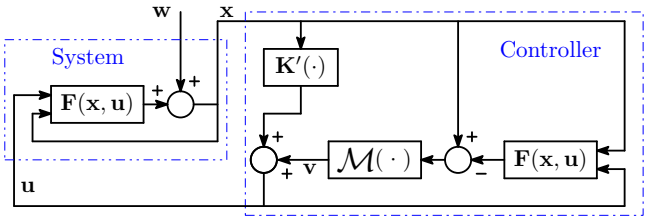}
  \caption{Proposed parametrization of all stabilizing controllers in terms of one freely chosen operator $\Emme \in \mathcal{L}_p$.}
  \label{fig:scheme}
\end{figure}


\begin{remark}
The assumption of strong stabilizability is pivotal to parametrizing all stabilizing controllers. The classical work \cite{anantharam1984stabilization} has parametrized classes of nonlinear stabilizing controllers in continuous-time based on  strongly stabilizing base controllers. The case of linear systems is no exception; a linear strongly stabilizing controller yields a doubly-coprime factorization of the plant, which enables the well-known Youla parametrization \cite{zhou1998essentials} of linear stabilizing controllers. From a theoretical point of view, the results of Theorem~\ref{th:result} can  be interpreted as the counterpart  of \cite{anantharam1984stabilization,fujimoto1998state}   for discrete-time control systems, using the lens of SLS.  
\end{remark}

\subsection{The case of linear systems with nonlinear costs}
Consider a linear system of the form
\begin{equation}
\label{eq:linear_system}
    x_t = Ax_{t-1}+Bu_{t-1}+w_t\,,
\end{equation}
and let $z$ denote the time-shift operator. The classical Youla  parametrization \cite{zhou1998essentials} states that all \emph{linear} stabilizing control policies $\mathbf{u} = \mathbf{K}\xb$ can be written in terms of a stable Youla parameter $\mathbf{Q}\in \mathcal{TF}_s$, where $\mathcal{TF}_s$ denotes the set of stable transfer matrices. For the state-feedback case, the Youla approach \cite{zhou1998essentials} simplifies to finding  a $K'$ such that $(A+BK')$ is Schur-stable, and then expressing all other linear stabilizing control policies as
\begin{equation}
\label{eq:Youla_Linear}
    \mathbf{u} = \frac{\left(K'-\mathbf{Q}\right)}{z}(A\mathbf{x}+B\ub) + \mathbf{Q}\mathbf{x} \,,\quad \mathbf{Q} \in \mathcal{TF}_s\,.
\end{equation}
The class of linear control policies is globally optimal for standard LQG problems, and it allows optimizing over $\mathbf{Q} \in \mathcal{TF}_s$ using  convex-programming. However, these properties are lost  when the controller is distributed \cite{furieri2020sparsity}, or the cost function is nonconvex — even if the system dynamics are linear. For the latter scenario, the recent paper \cite{wang2021learning} characterizes all stabilizing (and contractive) nonlinear policies for output-feedback linear systems by using a nonlinear and contractive Youla parameter. We provide an alternative proof of such result with adaptation to $\mathcal{L}_p$ closed-loop maps in the state-feedback case as an immediate corollary of Theorem~\ref{th:result}. 
\begin{corollary}
 Consider the linear system \eqref{eq:linear_system} and assume that $(A,B)$ is stabilizable. Let $K'$ be such that $(A+BK')$ is Schur-stable. Then, all control policies achieving nonlinear closed-loop maps in $\mathcal{CL}_p$ are expressed, for $\Emme \in \mathcal{L}_{p}$, as
 \begin{equation}
     \label{eq:all_policies}
     \mathbf{u} = K'\mathbf{x} + \Emme\left(\mathbf{x}-z^{-1}\left(A\mathbf{x}+B\mathbf{u}\right)\right)\,.
 \end{equation}
 
 \end{corollary}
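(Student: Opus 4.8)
The plan is to recognize the corollary as a direct specialization of Theorem~\ref{th:result} to the linear operator associated with \eqref{eq:linear_system}, so that the whole argument reduces to two things: (i) identifying the correct operator $\Fb$ and base controller $\mathbf{K}'$ that make \eqref{eq:stabilizing_input} coincide with \eqref{eq:all_policies}, and (ii) verifying that the static gain $\mathbf{K}'(\xb)=K'\xb$ is \emph{strongly} $\ell_p$-IS stabilizing in the sense of Definition~\ref{def:strongly_stabilizing}. Once these are in place, the two conclusions of the corollary are just the two parts of Theorem~\ref{th:result} read off in the linear case.

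First I would write \eqref{eq:linear_system} in operator form $\xb = \Fb(\xb,\ub)+\wb$ with the strictly causal operator $\Fb(\xb,\ub) = z^{-1}(A\xb + B\ub)$, whose strict causality comes from the one-step delay $z^{-1}$. The reconstructed noise is then exactly $\wb = \xb - \Fb(\xb,\ub) = \xb - z^{-1}(A\xb+B\ub)$, which is precisely the argument of $\Emme$ in \eqref{eq:all_policies}. Taking the base controller to be $\mathbf{K}'(\xb) = K'\xb$, the general stabilizing input \eqref{eq:stabilizing_input}, namely $\ub = \mathbf{K}'(\xb)+\Emme(\wb)$, specializes verbatim to \eqref{eq:all_policies}, so no structural manipulation is needed beyond this identification.

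Next I would verify strong $\ell_p$-IS stabilizability. Substituting $\ub = K'\xb + \mathbf{v}$ into \eqref{eq:CL_1} gives the closed loop $\xb = z^{-1}\!\left((A+BK')\xb + B\mathbf{v}\right) + \wb$ together with $\ub = K'\xb + \mathbf{v}$, i.e.\ the map $(\wb,\mathbf{v})\mapsto(\xb,\ub)$ is the linear time-invariant system $\xb = \left(I - z^{-1}(A+BK')\right)^{-1}\!\left(\wb + z^{-1}B\mathbf{v}\right)$. Because $(A+BK')$ is Schur-stable by hypothesis, its impulse response decays exponentially and is therefore absolutely summable, which makes this map $\ell_p$-stable for every $p\in\mathbb{N}\cup\infty$; hence $\mathbf{K}'$ satisfies Definition~\ref{def:stabilizing}. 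Moreover, multiplication by the constant matrix $K'$ is a bounded causal linear operator, so $\mathbf{K}' \in \mathcal{L}_p$, and $\mathbf{K}'$ is \emph{strongly} $\ell_p$-IS stabilizing.

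Finally I would invoke Theorem~\ref{th:result}: part~1 guarantees that \eqref{eq:all_policies} yields closed-loop maps in $\mathcal{CL}_p[\Fb]$ for every $\Emme\in\mathcal{L}_p$, while part~2 — which uses exactly the strong stabilizability just established — guarantees completeness, i.e.\ every $(\Psix,\Psiu)\in\mathcal{CL}_p[\Fb]$ is realized by some $\Emme\in\mathcal{L}_p$ through \eqref{eq:all_policies}. The only genuinely substantive step is the $\ell_p$-stability claim for the closed loop; although it is standard that Schur stability of $(A+BK')$ yields an exponentially decaying (hence $\ell_p$-summable) impulse response, I would state it carefully, since it is precisely what upgrades plain state-feedback stabilizability to the $\ell_p$-IS stabilizability required by Definitions~\ref{def:stabilizing}--\ref{def:strongly_stabilizing}.
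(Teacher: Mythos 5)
Your proposal is correct and follows essentially the same route as the paper: identify $\Fb(\xb,\ub)=z^{-1}(A\xb+B\ub)$, note that the static gain $K'$ is strongly $\ell_p$-IS stabilizing (IS-stabilizing because $A+BK'$ is Schur, and in $\mathcal{L}_p$ because multiplication by a constant matrix preserves $\ell_p$), and then invoke Theorem~\ref{th:result}. The only difference is one of detail: the paper disposes of the $\ell_p$-IS stabilizability in a single phrase (``by assumption''), whereas you correctly spell out the standard step that Schur stability gives an exponentially decaying, hence absolutely summable, impulse response for the map $(\wb,\mathbf{v})\mapsto(\xb,\ub)$, which yields $\ell_p$-stability for every $p$.
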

 \begin{proof}
     The linear base controller $K'$ is $\ell_p$-IS stabilizing by assumption and it is strongly stabilizing since $K'\mathbf{x} \in \ell^n_{p}$ for each $\xb \in \ell_{p}^n$. 
     The conclusion follows by applying point 2 of Theorem~\ref{th:result}.
 \end{proof}
Note that, as expected, the linear Youla parametrization \eqref{eq:Youla_Linear} is a special case of the proposed parametrization \eqref{eq:all_policies} with $\Emme = \mathbf{Q}-K'$ and $\mathbf{Q} \in \mathcal{TF}_s$.

\section{Neural System Level Synthesis}
\label{sec:Neur-SLS}
In this section, we propose the Neur-SLS approach for tackling the NOC problem. Our goal is to exploit Theorem~\ref{th:result} in order to characterize highly expressive DNN control policies which are stabilizing by design — irrespective of how ``badly'' one may choose the DNN  weights. Then, appropriate DNN training leads to optimized performance.

\subsection{Finite-dimensional approximations of $\mathcal{L}_p$ using RENs}
An obstacle to directly applying the results of Theorem~\ref{th:result} in practice is that the space $\mathcal{L}_{p}$ is infinite-dimensional. Hence, a finite-dimensional approximation of $\Emme \in \mathcal{L}_{p}$ is necessary for implementation. When linear systems are considered, the SLS approach  \cite{wang2019system}  suggests searching over FIR transfer matrices $\mathbf{M} = \sum_{i=0}^N M[i]z^{-i} \in \mathcal{TF}_s$. 
One can then optimize over the finitely many real matrices $M[i]$ for every $i=0,\ldots,N$ and obtain near-optimal solutions by selecting the FIR order $N$ to be large enough.

However, the FIR approach  limits the search to linear control policies. Recently, \cite{kim2018standard,revay2021recurrent} have proposed finite-dimensional  DNN approximations of nonlinear $\mathcal{L}_{p}$ operators. 
An operator $\Emme:\ell^n\rightarrow \ell^m$ is a REN if the relationship $\mathbf{u} = \Emme(\wb)$ is generated by the following dynamical system:
\begin{equation}
\label{eq:RENs}
        \begin{bmatrix}
    \xi_t \\  v_t \\ u_t
  \end{bmatrix}
  =
  \overbrace{
  \begin{bmatrix}
    A_{1} & B_1 & B_2 \\
    C_1 & D_{11} & D_{12} \\
    C_2 & D_{21} & D_{22}
  \end{bmatrix}
  }^{W}
  \begin{bmatrix}
    \xi_{t-1} \\
    \sigma(v_t)\\
    w_t
  \end{bmatrix}\,,\quad \xi_{-1} = 0\,,
\end{equation}
where $\xi_t \in \mathbb{R}^{q}$, $v_t \in \mathbb{R}^r$, and $\sigma:\mathbb{R} \rightarrow \mathbb{R}$ — the activation function — is applied element-wise. Further, $\sigma(\cdot)$ must be piecewise differentiable and with first derivatives restricted to the interval $[0,1]$. As noted in \cite{revay2021recurrent}, RENs subsume many existing DNN architectures. In general, RENs define \textit{deep equilibrium network} models \cite{bai2019deep} due to the implicit relationships between the signals involved in \eqref{eq:RENs}. By restricting $D_{11}$ to be strictly lower-triangular, the outputs of \eqref{eq:RENs} can be computed explicitly, thus significantly speeding-up computations \cite{revay2021recurrent}. 
Further, note that the nonlinear relationship defined by \eqref{eq:RENs} is ``arbitrarily deep''; the nonlinearity $\sigma(\cdot)$ is recursively applied on the REN internal state $\xi_{t-1}$ and the input $w_t$ for $r$ times, where $r$ is the chosen dimension of $v_t$. 

For an arbitrary choice of $W$, the map $\Emme$ induced by \eqref{eq:RENs} may not lie in $\mathcal{L}_p$. The work \cite{revay2021recurrent} provides an explicit smooth mapping $\Theta:\mathbb{R}^d \rightarrow \mathbb{R}^{(q+r+m) \times (q+r+n)}$ from unconstrained training parameters $\theta \in \mathbb{R}^d$ to  a matrix $W=\Theta(\theta) \in \mathbb{R}^{(q+r+m) \times (q+r+n)}$ defining \eqref{eq:RENs}, with the property that the corresponding operator $\Emme[\theta]$ lies in $ \mathcal{L}_2$ by design.\footnote{Furthermore, RENs enjoy contractivity — although the theoretical results of this paper do not rely on this property.} 

\subsection{Neur-SLS}
By combining the theoretical results of Theorem~\ref{th:result} with REN approximations of operators in $\mathcal{L}_2$, we can cast the following unconstrained Neur-SLS learning problem:
\begin{alignat}{3}
&~&&\quad~~~~~~~~~~ \operatorname{\textbf{Neur-SLS}}\nonumber\\
   \quad &\min_{\theta \in \mathbb{R}^d}&& \frac{1}{S}\sum_{s=1}^S\left[\sum_{t=0}^{T}l(x_t^s,u_t^s)\right]\label{eq:loss}\\
  &\operatorname{s.t.}~~ && x_{t}^s = f_t(x_{t-1:0}^s,u_{t-1:0}^s)+w_t^s\,, ~~w_0^s = x_0^s\,,\label{eq:dynamics_s}\\
  &\begin{bmatrix}\xi_t^s\\v_t^s\\u_t^s\end{bmatrix} \hspace{-0.15cm}&&=\hspace{-0.15cm} \begin{bmatrix}0\\0\\K'_t(x_{t:0}^s)\end{bmatrix}\hspace{-0.1cm}+ \hspace{-0.1cm}\Theta(\theta) \begin{bmatrix}\xi_{t-1}^s\\\sigma(v_t^s)\\x_{t}^s\hspace{-0.1cm}-\hspace{-0.1cm}f_t(x_{t-1:0}^s,u_{t-1:0}^s) \end{bmatrix}\hspace{-0.1cm},\hspace{-0.1cm}\label{eq:Neur-SLS_controller}\\
  &~&& t = 0,1,\ldots\,,~~~\xi_{-1} = 0\,.\label{eq:Neur_SLS_end}
  \end{alignat}
  
In the above problem, $\mathbf{K}'$ is an $\ell_2$-IS stabilizing base controller, and $\{w_{T:0}^s\}_{s=1}^S$ is a given training set of disturbances. The cost function \eqref{eq:loss} is defined as the empiric average of the loss evaluated over the training set, and the system dynamics are imposed through \eqref{eq:dynamics_s} for every $\mathbf{w}^s$. The relationships \eqref{eq:Neur-SLS_controller}-\eqref{eq:Neur_SLS_end} define a control sequence $\mathbf{u}^s = \mathbf{K}'(\mathbf{x}^s)+\Emme[\theta](\mathbf{x}^s-\mathbf{F}(\mathbf{x}^s,\mathbf{u}^s))$, where $\Emme[\theta] \in \mathcal{L}_2$ for every $\theta$. As a result, each value of $\theta \in \mathbb{R}^d$ yields closed-loop maps $(\Psix,\Psiu) \in \mathcal{CL}_2$.


We remark that the class of  all $\ell_2$-stable REN operators may be significantly more restrictive than the class of all operators in $\mathcal{L}_2$. Hence, learning over all $\ell_2$-stable REN operators may not allow reaping the benefits of the completeness result of  point $2)$ of Theorem~\ref{th:result}. This is why in the  REN-based Neur-SLS implementation \eqref{eq:loss}-\eqref{eq:Neur_SLS_end} we allow $\mathbf{K}'$ being $\ell_2$-IS stabilizing, but not necessarily strongly $\ell_2$-IS stabilizing. Based on the above reasoning, an important takeaway of Theorem~\ref{th:result} is that developing finite-dimensional approximations of $\mathcal{L}_p$ that are \emph{as large as possible} is a crucial endeavor — and a worthy one — towards globally optimal solutions of NOC. 



%

\section{Numerical experiments}
\label{sec:numerical}
\allowdisplaybreaks

In this section, we illustrate the use of Neur-SLS to tackle NOC problems. Further,  we validate the closed-loop stability guarantees during and after training. 
We implement the Neur-SLS \eqref{eq:loss}-\eqref{eq:Neur_SLS_end} using PyTorch and train the resulting DNN with stochastic gradient descent. 
The code to reproduce the examples and the 
implementation details are available at \url{https://github.com/DecodEPFL/neurSLS.git}.

We consider point-mass vehicles with position $p_t \in \mathbb{R}^2$ and velocity $q_t \in \mathbb{R}^2$ subject to nonlinear drag forces (e.g., air or water resistance). The discrete-time model for each vehicle of mass $m\in \mathbb{R}$ is 
\begin{equation}
\label{eq:mechanical_system}
    \begin{bmatrix}p_{t}\\q_{t}\end{bmatrix}=  \begin{bmatrix}p_{t-1}\\q_{t-1}\end{bmatrix} + T_s\begin{bmatrix}q_{t-1}\\m^{-1}\left(-C(q_{t-1})q_{t-1}+F_{t-1}\right)\end{bmatrix}\,,
\end{equation}
where $F_t \in \mathbb{R}^2$ denotes the force control input, $T_s>0$ is the sampling time and $C:\mathbb{R}^2\rightarrow \mathbb{R}$ is a positive \emph{drag function}. Typical scenarios include $C(q) = b|q|_2$ and $C(q) = b$ for some $b>0$. 
By comparing \eqref{eq:mechanical_system} with \eqref{eq:system}, we remark that the disturbance sequence is given by $((p_0,q_0,0,0),(0,0,0,0),\ldots) \in \ell_2^4$. Letting $\overline{p}\in \mathbb{R}^2$ and $\overline{q} = 0_2$ be a target position to be reached with zero velocity, we consider a  base controller $u_t = K'(\overline{p}-p_{t})$ with $K' = \operatorname{diag}(k_1',k_2')$ and $k_1',k_2'>0$. Note that the base controller is  strongly $\ell_2$-IS stabilizing if $C(q) = b$, and strongly $\ell_\infty$-IS stabilizing if $C(q)$ includes the nonlinearity $b|q|_2$.

We consider $N \in \mathbb{N}$ vehicles \eqref{eq:mechanical_system} with state $x_t \in \mathbb{R}^{4N}$ and input $u_t \in \mathbb{R}^{2N}$. 
We select the stage loss  
in \eqref{eq:loss} as
\begin{equation}\label{eq:loss_CA}
  l(x_t,u_t)= l_{traj}(x_t,u_t) + l_{ca}(x_t)+ l_{obs}(x_t)\,,
\end{equation}
where $l_{traj}(x_t,u_t) = \begin{bmatrix}x_t^\mathsf{T}&u_t^\mathsf{T}\end{bmatrix}Q\begin{bmatrix}x_t^\mathsf{T}&u_t^\mathsf{T}\end{bmatrix}^\mathsf{T}$ with $Q\succeq 0$ penalizes the distance of agents from their targets and the control energy, $l_{ca}(x_t)$ and $l_{obs}(x_t)$ penalize collisions between agents and with obstacles, respectively.

\subsection{Results}

The scenario \texttt{mountains} in Figure~\ref{fig:corridor} involves two agents 
whose goal is to coordinately pass through a narrow valley.
Both agents start from a randomly sampled initial position marked with ``$\circ$''. 
The scenario \texttt{swapping} in Figure~\ref{fig:robots_12} considers twelve agents switching their positions, while avoiding all collisions.\footnote{The \texttt{mountains} and \texttt{swapping} benchmarks are motivated by the examples in \cite{onken2021neural, furieri2021distributed}.} Animations  are available in our  \href{https://github.com/DecodEPFL/neurSLS.git}{Github repository.} We train Neur-SLS control policies to optimize the performance \eqref{eq:loss}-\eqref{eq:loss_CA} over a horizon of $5$ seconds with sampling time $T_s=0.05\si{\second}$, resulting in $T = 100$ time-steps. We consider  a linear drag force $C(q)q = bq$ for \texttt{swapping} and a non-linear drag force $C(q)q=b_1q+b_2|q|_2q$ for \texttt{mountains}, with suitable $b,b_1,b_2>0$. The trajectories after training are reported in Figures \ref{fig:corridor} and \ref{fig:robots_12}. The trained Neur-SLS control policies avoid collisions and achieve optimized trajectories thanks to minimizing \eqref{eq:loss_CA}.
Despite training for a finite-horizon, Theorem~\ref{th:result} guarantees closed-loop stability (i.e., $\ell_2$-stability for \texttt{swapping} and $\ell_\infty$-stability for \texttt{mountains}) around the target positions for $t\rightarrow \infty$. For validation, 
in Figure \ref{fig:loss} we consider  \texttt{mountains}  and report the cumulative loss $\sum_{k=0}^{t}l(x_k^s,u_k^s)$ at $t=0,1,\dots,500$ for $10$ randomly sampled initial conditions $\{w_0^s\}_{s=1}^{10}$  before and after the training. The cumulative loss converges to a constant value before and after training. 
Hence, the vehicles reach the target formation with vanishing control inputs as $t\rightarrow \infty$ irrespective of the chosen DNN parameters. This fact validates the built-in $\ell_\infty$-stability of the closed-loop maps.








\begin{figure}
  \centering
  \begin{minipage}{0.32\linewidth}
      \includegraphics[width=\linewidth]{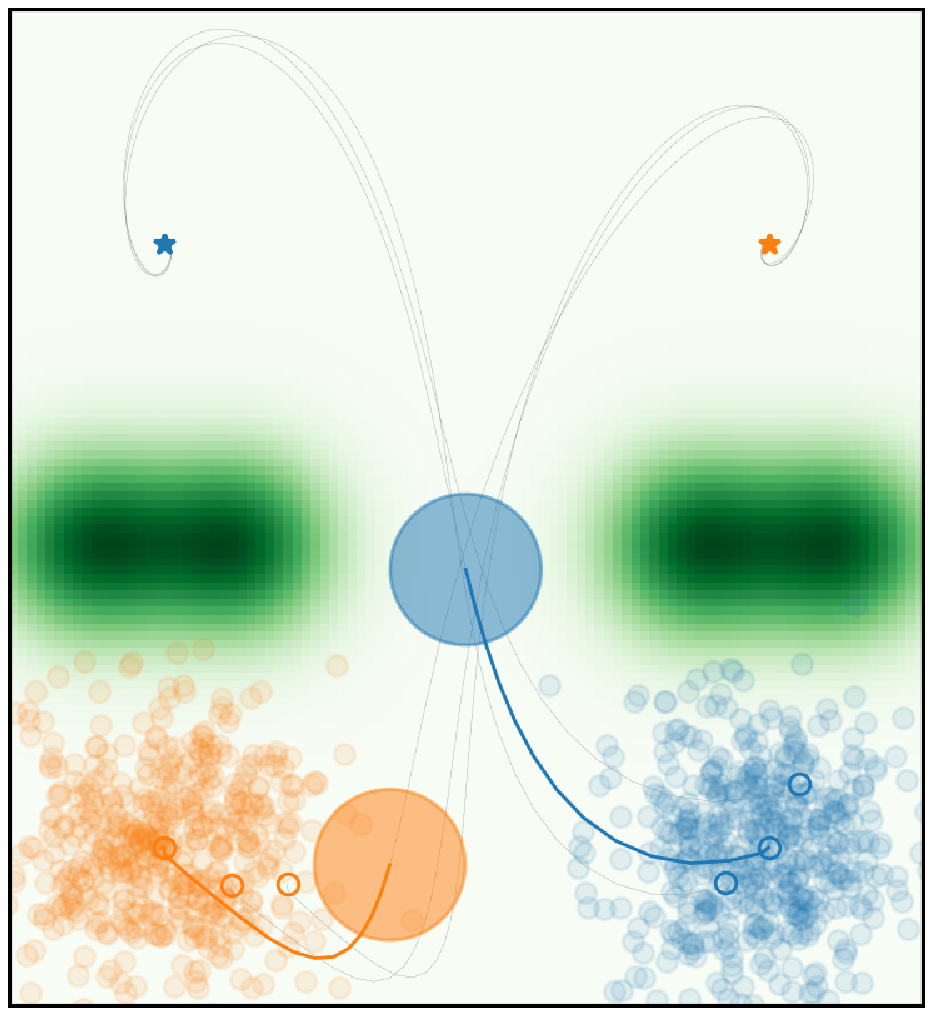}
  \end{minipage}%
  \begin{minipage}{0.32\linewidth}
      \includegraphics[width=\linewidth]{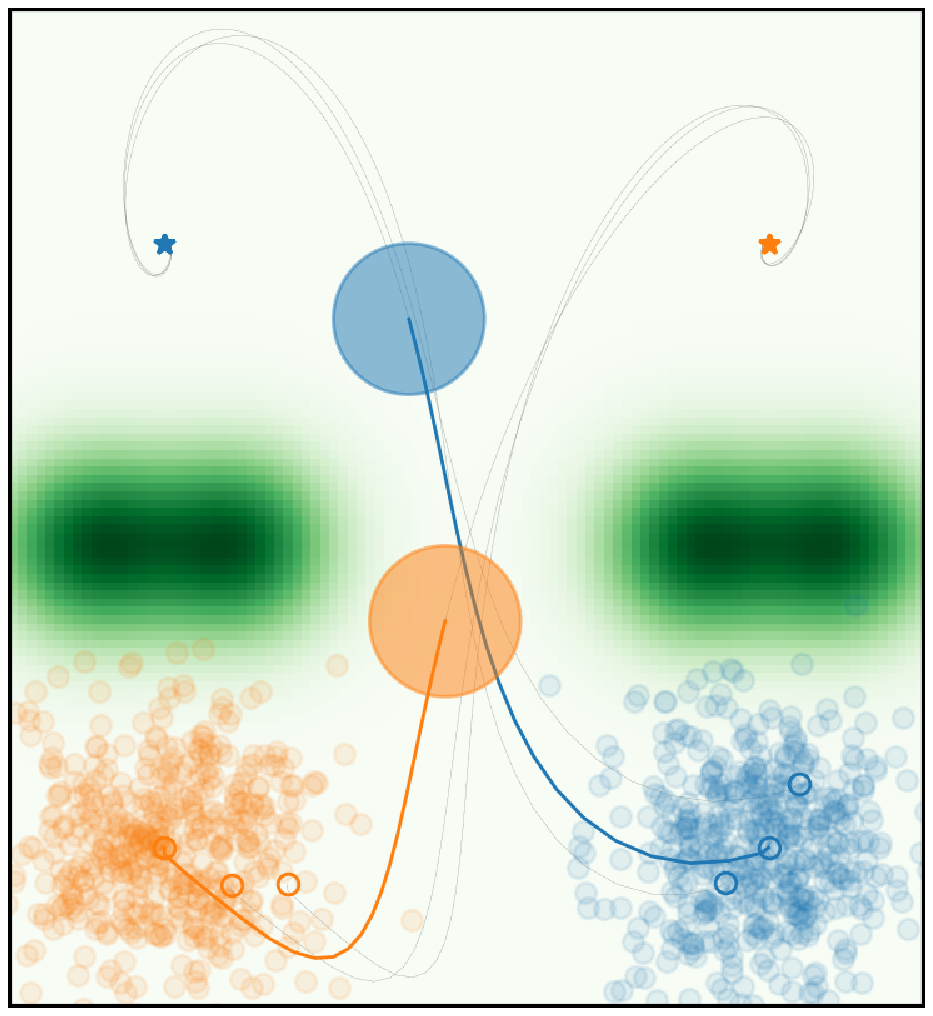}
  \end{minipage}%
  \begin{minipage}{0.32\linewidth}
      \includegraphics[width=\linewidth]{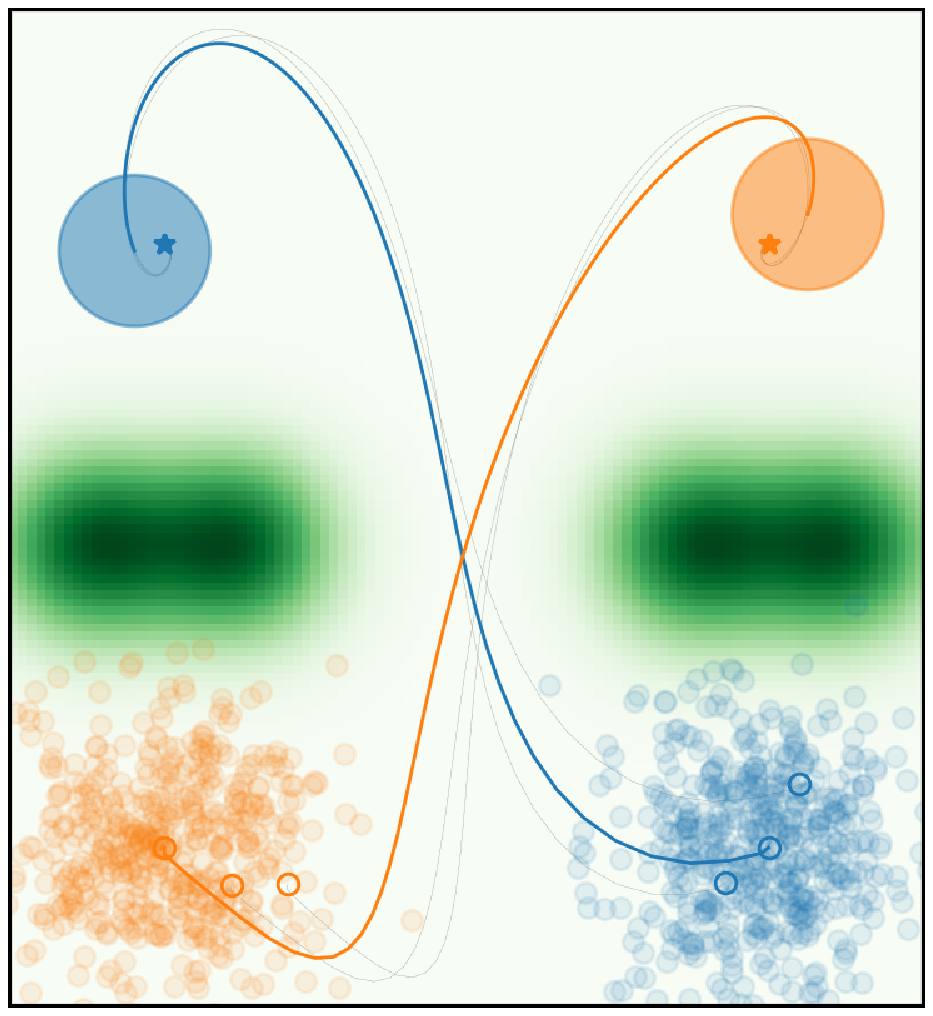}
  \end{minipage}
  \caption{\texttt{Mountains} --- Closed-loop trajectories after training over 500 randomly sampled initial conditions marked with $\circ$. Snapshots taken at times $\tau_1=0.7$, $\tau_2=1.1$ and $\tau_3=5.0$ seconds. Colored (gray) lines show the trajectories in $[0,\tau_i]$ ($[\tau_i,\infty)$). Colored balls (and their radius) represent the agents (and their size for collision avoidance).} 
  \label{fig:corridor}
\end{figure}

\begin{figure}
  \centering
    \begin{minipage}{0.30\linewidth}
      \includegraphics[width=\linewidth]{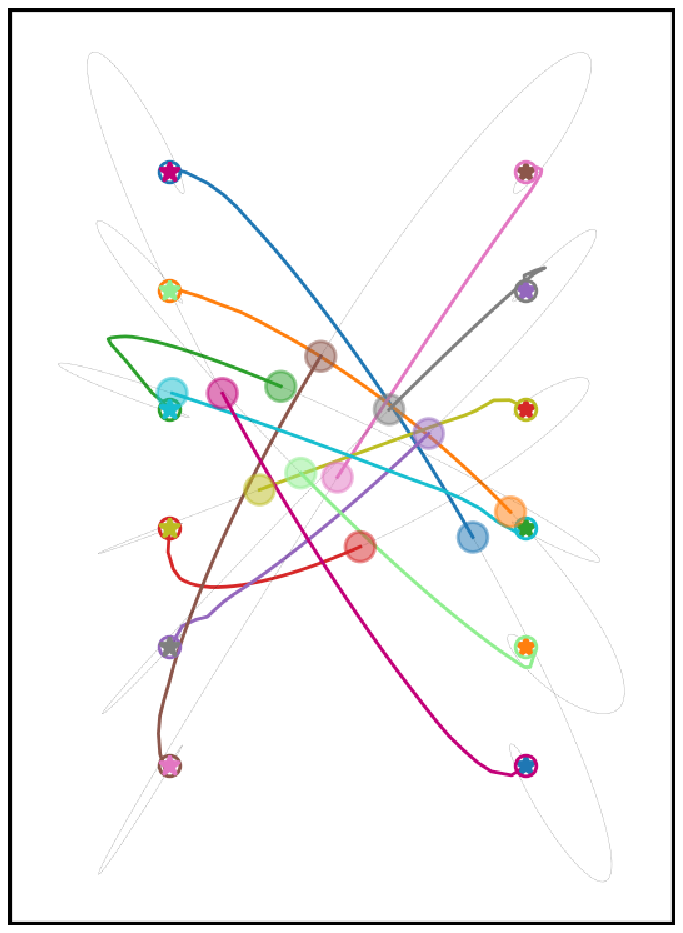}
  \end{minipage}%
  \begin{minipage}{0.30\linewidth}
      \includegraphics[width=\linewidth]{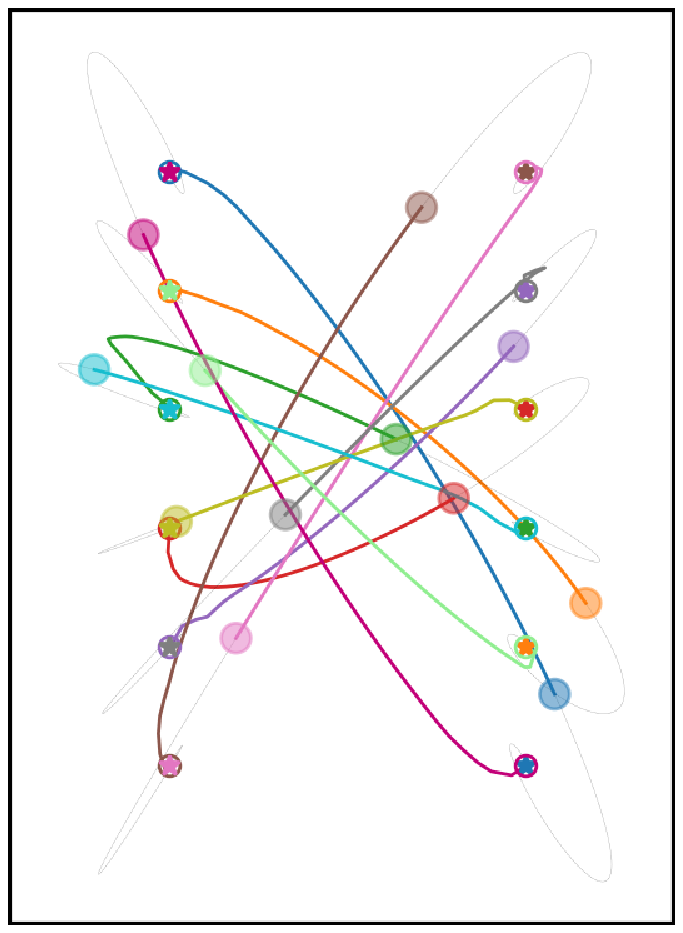}
  \end{minipage}%
  \begin{minipage}{0.30\linewidth}
      \includegraphics[width=\linewidth]{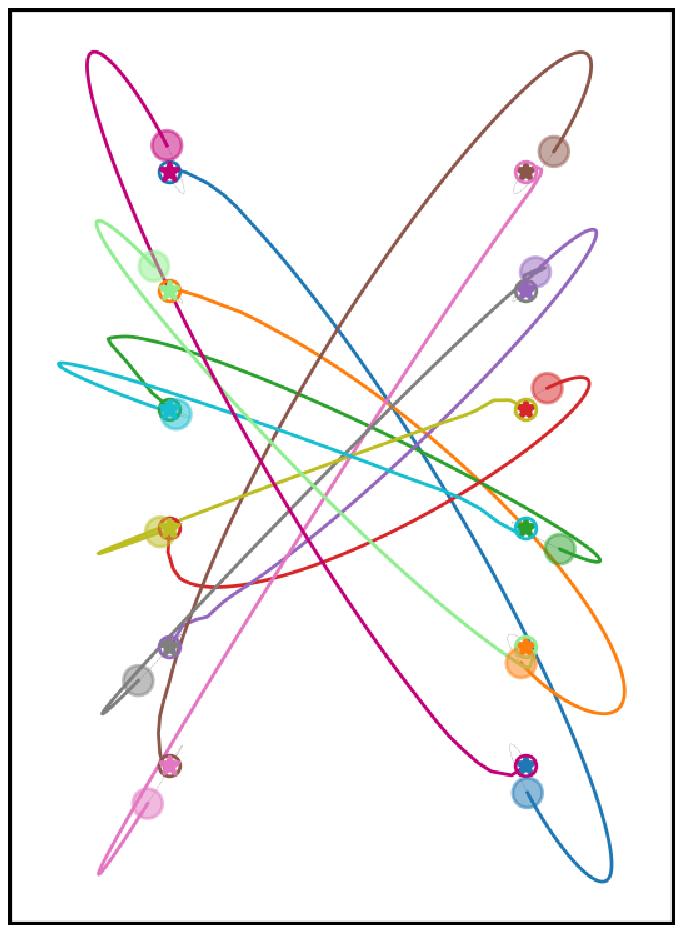}
  \end{minipage}
  \caption{\texttt{Swapping} --- Closed-loop trajectories after training. Snapshots taken at times $\tau_1=1.25$, $\tau_2=1.75$ and $\tau_3=5.0$ seconds. Colored (gray) lines show the trajectories in $[0,\tau_i]$ ($[\tau_i,\infty)$). Colored balls (and their radius) represent the agents (and their size for collision avoidance).}
  \label{fig:robots_12}
\end{figure}

\begin{figure}
  \centering
  \includegraphics[width=0.87\linewidth]{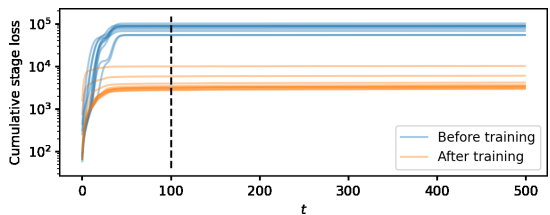}
  \caption{Cumulative stage loss in \texttt{mountains} before and after training. The black dashed line indicates the training horizon $T=100$.}
  \label{fig:loss}
\end{figure}


\smallskip

\section{Conclusions}
As we move towards designing deep nonlinear policies for general optimal control problems, it is crucial to guarantee closed-loop stability during and after training. We embrace a system-level perspective and propose a two-step procedure to parametrize all and only the stabilizing controllers in terms of one stable Youla-like operator. This results in a Neur-SLS optimization problem that can be tackled by training a DNN with unconstrained gradient descent. The proposed system-level perspective may open up several venues for future research, including output-feedback and applications to constrained, distributed and data-driven nonlinear control.

\section*{Acknowledgments}
We thank Dr. Alessandro Bosso for helpful discussions.
\bibliographystyle{IEEEtran}
\bibliography{references.bib}

\end{document}